\begin{document}

\title{Evaluation of Chebyshev polynomials on intervals and application
to root finding}

\author{%
  Viviane Ledoux\inst{1,2} \and
  Guillaume Moroz\inst{1}
}
\institute{%
  INRIA, France {\tt Firstname.Name@inria.fr} \and
  École Normale Supérieure {\tt Firstname.Name@ens.fr}.
}

\maketitle

\begin{abstract}
  In approximation theory, it is standard to approximate functions by
  polynomials expressed in the Chebyshev basis. Evaluating a polynomial $f$
  of degree n given in the Chebyshev basis can be done in $O(n)$ arithmetic
  operations using the Clenshaw algorithm. Unfortunately, the evaluation
  of $f$ on an interval $I$ using the Clenshaw algorithm with interval
  arithmetic returns an interval of width exponential in $n$. We describe
  a variant of the Clenshaw algorithm based on ball arithmetic that
  returns an interval of width quadratic in $n$ for an interval of small
  enough width. As an application, our variant of the Clenshaw algorithm
  can be used to design an efficient root finding algorithm.

  \keywords{%
    Clenshaw algorithm \and
    Chebyshev polynomials \and
    Root finding \and
    Ball arithmetic \and
    Interval arithmetic.
  }
\end{abstract}

\section{Introduction}

Clenshaw showed in 1955 that any polynomial given in the form
\begin{equation}
  p(x) = \sum_{i=0}^n a_i T_i(x)
\end{equation}
can be evaluated on a value $x$ with a single loop using the following
functions defined by recurrence:
\begin{equation}
  \label{eq:clenshaw}
  u_k(x) = 
  \begin{cases}
    0   & \text{if } k = n+1\\
    a_n & \text{if } k = n\\
    2xu_{k+1}(x) - u_{k+2}(x) + a_k & \text{if } 1\leq k < n\\
    x u_1(x) - u_2(x) + a_0         & \text{if } k = 0
  \end{cases}
\end{equation}
such that $p(x) = u_0(x)$.

%\begin{algorithm}
%  \caption{Clenshaw evaluation algorithm}
%  \label{alg:clenshaw}
%  \begin{algorithmic}
%    \Function{Clenshaw}{$(a_0,\ldots, a_n)$, $x$}
%      \State $u_{n+1} \gets 0$
%      \State $u_n \gets a_n$
%      \For{$r$ in $n-1, n-2, \ldots, 1$}
%        \State $u_k \gets 2xu_{k+1} - u_{k+2} + a_k$
%      \EndFor
%      \State \Return $xu_1 - u_2 + a_0$
%    \EndFunction
%  \end{algorithmic}
%\end{algorithm}

Unfortunately, if we use Equation~\eqref{eq:clenshaw} with interval
arithmetic directly, the result can be an interval of size exponentially
larger than the input, as illustrated in Example~\ref{ex:interval}.

\begin{example}\label{ex:interval}
  Let $\varepsilon > 0$ be a positive real number, and let $x$ be the
interval $[\frac 12 - \varepsilon, \frac 12 + \varepsilon]$ of width
$2\varepsilon$. Assuming that $a_n = 1$, we can
see that $u_{n-1}$ is an interval of width $4\varepsilon$. Then by
recurrence, we observe that $u_{n-k}$ is an interval of width at least
$4\varepsilon F_k$ where ${(F_n)}_{n\in\mathbb N}$ denotes the Fibonacci
sequence, even if all $a_i=0$ for $i<n$.
\end{example}

Note that the constant below the exponent is even higher when $x$ is
closer to $1$. These numerical instabilities also appear with floating
point arithmetic near $1$ and $-1$ as analyzed in~\cite{Gcj69}.

To work around the numerical instabilities near $1$ and $-1$, Reinsch
suggested a variant of the Clenshaw algorithm~\cite{Gcj69,Ojima77}.
Let $d_n(x) = a_n$ and  $u_n(x) = a_n$, and for $k$ between $0$ and $n-1$,
define $d_k(x)$ and $u_k(x)$ by recurrence as follows:
\begin{equation}
  \label{eq:reinsch}
  \begin{cases}
    d_k(x) = 2(x-1)u_{k+1}(x) + d_{k+1}(x) +  a_k\\
    u_k(x) = d_k(x) + u_{k+1}
  \end{cases}
\end{equation}
Computing $p(x)$ with this recurrence is numerically more stable near
$1$. However, this algorithm does not solve the problem of exponential
growth illustrated in Example~\ref{ex:interval}.

Our first main result is a generalization of Equation~\ref{eq:reinsch}
for any value in the interval $[-1, 1]$. This leads to
Algorithm~\ref{alg:ballclenshaw_forward} that returns intervals with
tighter radii, as analyzed in Lemma~\ref{lem:analysis}. Our second main
result is the use of classical backward error analysis to derive
Algorithm~\ref{alg:ballclenshaw_backward} which gives an even better
radii. Then in Section~\ref{sec:solve} we use the new evaluation
algorithm to design a root solver for Chebyshev series, detailed in
Algorithm~\ref{alg:subdivision}.

\section{Evaluation of Chebyshev polynomials on intervals}
\subsection{Forward error analysis}
\label{sec:evaluate}
In this section we assume that we want to evaluate a Chebyshev
polynomial on the interval $I$. Let $a$ be the center of $I$ and $r$ be
its radius. Furthermore, let $\gamma$ and $\overline \gamma$ be the $2$
conjugate complex roots of the equation:
\begin{equation}
  \label{eq:quadratic}
  X^2 - 2aX + 1 = 0.
\end{equation}
In particular, using Vieta's formulas that relate the coefficients
to the roots of a polynomial, $\gamma$ satisfies $\gamma + \overline
\gamma = 2a$ and $\gamma\overline \gamma = 1$.

Let $z_n(x) = a_n$ and  $u_n(x) = a_n$, and for $k$ between $0$ and $n-1$,
define $z_k(x)$ and $u_k(x)$ by recurrence as follows:
\begin{equation}
  \begin{cases}
    z_k(x) = 2(x-a)u_{k+1}(x) + \gamma z_{k+1}(x) +  a_k\\
    u_k(x) = z_k(x) + \overline \gamma u_{k+1}(x)
  \end{cases}
\end{equation}

Using Equation~\eqref{eq:quadratic}, we can check that the $u_k$
satisfies the recurrence relation $u_k(x) = 2xu_{k+1}(x) - u_{k+2}(x) + a_k$,
such that $p(x) = xu_1(x) - u_2(x) + a_0$.

Let $(e_k)$ and $(f_k)$ be two sequences of positive real numbers. Let
$\mathcal B_\mathbb R(a,r)$ and $\mathcal B_\mathbb R(u_k(a), e_k)$
represent the intervals $[a-r, a+r]$ and $[u_k(a)-e_k, u_k(a)+e_k]$. Let
$\mathcal B_\mathbb C(z_k(a), f_k)$ be the complex ball of center
$z_k(a)$ and radius $f_k$.

Our goal is to compute recurrence formulas on the $e_k$ and the $f_k$ such that:
\begin{equation}
  \label{eq:inclusion}
  \begin{cases}
    \begin{aligned}
      z_k(\mathcal B_\mathbb R(a,r)) &\subset \mathcal B_\mathbb C(z_k(a), f_k)\\
      u_k(\mathcal B_\mathbb R(a,r)) &\subset \mathcal B_\mathbb R(u_k(a), e_k).
    \end{aligned}
  \end{cases}
\end{equation}

\begin{lemma}
  Let $e_n=0$ and $f_n=0$ and for $n > k \geq 1$:
  \begin{equation}
    \begin{cases}
      f_k = 2r|u_{k+1}(a)| + 2re_{k+1} + f_{k+1}\\
      e_k = \min(e_{k+1} + f_k, \frac{f_k}{\sqrt{1-a^2}}) \text{ if $|a|<1$  else } e_{k+1} + f_k
    \end{cases}
  \end{equation}
  Then, $(e_k)$ and $(f_k)$ satisfy Equation~\eqref{eq:inclusion}.
\end{lemma}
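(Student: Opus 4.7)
The plan is to proceed by reverse induction on $k$, from $k = n$ down to $k = 1$. At the base $k = n$, both $z_n$ and $u_n$ equal the constant $a_n$, so the choice $e_n = f_n = 0$ trivially satisfies Equation~\eqref{eq:inclusion}. For the inductive step, I fix an arbitrary $x \in \mathcal{B}_\mathbb{R}(a, r)$, assume the inclusions at index $k+1$, and bound $|z_k(x) - z_k(a)|$ and $|u_k(x) - u_k(a)|$ via the defining recurrences.

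For $f_k$, subtracting the $z_k$-recurrence at $x$ and at $a$ gives $z_k(x) - z_k(a) = 2(x-a)u_{k+1}(x) + \gamma(z_{k+1}(x) - z_{k+1}(a))$. I split $u_{k+1}(x) = u_{k+1}(a) + (u_{k+1}(x) - u_{k+1}(a))$ and apply the triangle inequality using $|x - a| \leq r$, the induction hypothesis, and the identity $|\gamma| = 1$, which follows from Vieta's relation $\gamma\overline{\gamma} = 1$ recalled after Equation~\eqref{eq:quadratic}. This yields exactly $|z_k(x) - z_k(a)| \leq 2r|u_{k+1}(a)| + 2r e_{k+1} + f_{k+1} = f_k$.

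The simple bound $e_k \leq e_{k+1} + f_k$ follows from $u_k(x) - u_k(a) = (z_k(x) - z_k(a)) + \overline{\gamma}(u_{k+1}(x) - u_{k+1}(a))$ combined with $|\overline{\gamma}| = 1$. The tighter bound $e_k \leq f_k/\sqrt{1 - a^2}$, valid when $|a| < 1$, is the main obstacle: it requires exploiting the observation already made in the excerpt that $u_k$ satisfies the standard real Clenshaw recurrence, so $u_k(x) \in \mathbb{R}$ whenever $x \in \mathbb{R}$. Writing $\overline{\gamma} = a - i\sqrt{1 - a^2}$, equating imaginary parts in $u_k(x) = z_k(x) + \overline{\gamma}\, u_{k+1}(x)$ forces $\mathrm{Im}(z_k(x)) = \sqrt{1 - a^2}\, u_{k+1}(x)$, and equating real parts then gives
\[
u_k(x) = \mathrm{Re}(z_k(x)) + \tfrac{a}{\sqrt{1 - a^2}}\,\mathrm{Im}(z_k(x)) = \mathrm{Re}\!\left(\left(1 - \tfrac{ia}{\sqrt{1 - a^2}}\right) z_k(x)\right).
\]
Since $\bigl|1 - ia/\sqrt{1 - a^2}\bigr| = 1/\sqrt{1 - a^2}$, applying this identity at both $x$ and $a$ and subtracting yields $|u_k(x) - u_k(a)| \leq |z_k(x) - z_k(a)|/\sqrt{1 - a^2} \leq f_k/\sqrt{1 - a^2}$, which closes the induction.
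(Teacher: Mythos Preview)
Your argument is correct. The treatment of the $f_k$ bound and of the crude bound $e_k \leq e_{k+1}+f_k$ coincides with the paper's: both rely on $|\gamma|=|\overline\gamma|=1$ and the triangle inequality applied to the subtracted recurrences.

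For the sharp bound $e_k \leq f_k/\sqrt{1-a^2}$ you take a somewhat different route. The paper argues geometrically: the error $u_k(x)-u_k(a)$ lies in the Minkowski sum of the disk $D(0,f_k)$ with the segment $\overline\gamma\,[-e_{k+1},e_{k+1}]$; since this error is real, one bounds it by the intersection of that ``stadium'' with the real axis, and since the segment makes an angle $\theta$ with the horizontal satisfying $\cos\theta=a$, this intersection has half-width $f_k/\sin\theta=f_k/\sqrt{1-a^2}$. You instead use the realness of $u_k(x)$ and $u_{k+1}(x)$ \emph{algebraically} to eliminate $u_{k+1}$ altogether, obtaining the closed form $u_k(x)=\mathrm{Re}\bigl((1-ia/\sqrt{1-a^2})\,z_k(x)\bigr)$, from which the bound follows by a single modulus estimate. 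Your derivation makes it transparent why $e_{k+1}$ disappears from the sharp bound (the term involving $u_{k+1}$ has been eliminated), whereas in the paper this comes out of the geometry of the strip containing the Minkowski sum. The two arguments are equivalent in content but yours is a clean algebraic reformulation of the paper's geometric picture.
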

\begin{proof}[sketch]
  For the inclusion $z_k(\mathcal B_\mathbb R(a,r)) \subset \mathcal
  B_\mathbb C(z_k(a), f_k)$, note that $\gamma$ has modulus $1$, such
  that the radius of $\gamma z_{k+1}$ is the same as the radius of
  $z_{k+1}$ when using ball arithmetics. The remaining terms bounding the radius of $z_k$ follow
  from the standard rules of interval arithmetics.

  For the inclusion $u_k(\mathcal B_\mathbb R(a,r)) \subset \mathcal
  B_\mathbb C(u_k(a), e_k)$, note that the error segment on $u_k$ is included in
  the Minkowski sum of a disk of radius $f_k$ and a segment of radius
  $e_{k+1}$, denoted by $M$. If $\theta$ is the angle of the segment with the
  horizontal, we have $\cos \theta = a$. We conclude that the
  intersection of $M$ with a horizontal line is a segment of radius at most
  $\min(e_{k+1} + f_k, \frac{f_k}{\sqrt{1-a^2}})$.

\end{proof}

\begin{corollary}
  Let $\mathcal B_\mathbb R(u,e)=\verb|BallClenshawForward|((a_0, \ldots, a_n), a, r)$ be the
  result of Algorithm~\ref{alg:ballclenshaw_forward}, then
  \begin{equation*}
    p(\mathcal B_\mathbb R(a, r)) \subset \mathcal B_\mathbb R(u, e)
  \end{equation*}
\end{corollary}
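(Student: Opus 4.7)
The plan is to reduce the corollary to the preceding lemma by applying it at the two indices needed to reconstruct $p$. Just before the lemma, the authors observe that the sequence $(u_k)$ satisfies the classical Clenshaw recurrence, so in particular $p(x) = x\, u_1(x) - u_2(x) + a_0$. Algorithm~\ref{alg:ballclenshaw_forward} presumably runs the ball recurrence of the lemma down to index~$1$, producing the enclosures $\mathcal B_\mathbb R(u_1(a), e_1)$ and $\mathcal B_\mathbb R(u_2(a), e_2)$, and then performs one final ball arithmetic step corresponding to the closing Clenshaw formula to output $\mathcal B_\mathbb R(u,e)$.

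The first step of the proof is to invoke the lemma at $k=1$ and $k=2$ to obtain
\[
  u_1(\mathcal B_\mathbb R(a,r)) \subset \mathcal B_\mathbb R(u_1(a), e_1), \qquad
  u_2(\mathcal B_\mathbb R(a,r)) \subset \mathcal B_\mathbb R(u_2(a), e_2).
\]
The second step is to combine these enclosures through $p(x) = x\, u_1(x) - u_2(x) + a_0$, using inclusion-monotonicity of ball arithmetic. Concretely, the standard product rule for real balls gives
\[
  \mathcal B_\mathbb R(a, r) \cdot \mathcal B_\mathbb R(u_1(a), e_1)
  \subset \mathcal B_\mathbb R\bigl(a\, u_1(a),\; |a| e_1 + r |u_1(a)| + r e_1\bigr),
\]
after which subtraction of the $u_2$ ball adds $e_2$ to the radius, and translation by $a_0$ leaves the radius unchanged. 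The resulting ball has centre $a\, u_1(a) - u_2(a) + a_0 = p(a)$, which matches the value $u$ computed by the algorithm, and a radius equal to the $e$ accumulated in its final step.

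There is no genuine mathematical obstacle here: once the lemma is in hand, the argument is a book-keeping verification that the final combination of $u_1$, $u_2$ and the input ball performed by Algorithm~\ref{alg:ballclenshaw_forward} agrees with the standard ball arithmetic rules used above. The only point requiring care is lining up the precise expression the algorithm assigns to $e$ at the last step with the radius produced by these inclusions, which is a direct reading of its pseudocode.
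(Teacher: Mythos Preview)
Your plan does not quite close the gap, and the mismatch is not mere book-keeping. Look at the final step of Algorithm~\ref{alg:ballclenshaw_forward}: it sets
\[
  f_0 = r|u_1| + 2r e_1 + f_1,\qquad
  e_0 = \min\!\Bigl(e_1 + f_0,\ \frac{f_0}{\sqrt{1-a^2}}\Bigr) + \varepsilon_0 .
\]
This is \emph{not} the radius produced by applying the real ball product rule to $x\cdot u_1 - u_2 + a_0$: your computation yields $|a|e_1 + r|u_1| + r e_1 + e_2$, which involves $e_2$, whereas the algorithm's $e_0$ involves $f_1$ (the radius of the complex ball tracking $z_1$). These two quantities are in general incomparable, so showing that your radius encloses $p(\mathcal B_\mathbb R(a,r))$ does not show that the algorithm's $e_0$ does.

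The paper gives no separate proof of the corollary because the intended reading is that the last step of the algorithm is simply the $k=0$ instance of the lemma's recurrence, with the coefficient $2(x-a)$ replaced by $(x-a)$ to match $p(x)=x u_1(x)-u_2(x)+a_0$. The validity of $e_0$ then follows from the \emph{same} argument as in the lemma's proof: one forms $z_0$ in the complex ball $\mathcal B_\mathbb C(z_0(a),f_0)$ and bounds the real part of $z_0 + \overline{\gamma}\,u_1$ via the Minkowski-sum geometry, obtaining exactly the $\min(e_1+f_0,\,f_0/\sqrt{1-a^2})$ formula. To repair your argument you should carry out this extension of the lemma to $k=0$ rather than switching to ordinary real ball multiplication at the last step.
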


Moreover, the following lemma bounds the radius of the ball returned by
Algorithm~\ref{alg:ballclenshaw_forward}.
\begin{lemma}
  \label{lem:analysis}
  Let $\mathcal B_\mathbb R(u,e)=\verb|BallClenshawForward|((a_0, \ldots, a_n), a, r)$ be the
  result of Algorithm~\ref{alg:ballclenshaw_forward}, and
  let $M$ be an upper bound on $|u_k(a)|$ for $1 \leq k \leq
  n$. Assume that $\varepsilon_k < Mr$ for $1 \leq k \leq n$, then
  \begin{equation*}
    \left\{
      \begin{array}{llll}
        e < 2Mn^2r &\text{if}\quad n < &\frac1{2\sqrt{1-a^2}}& \\
        e < 9Mn\frac{r}{\sqrt{1-a^2}}   &\text{if} &\frac1{2\sqrt{1-a^2}} \leq n < &\frac{\sqrt{1-a^2}}{2r}\\
        e < 2M\left[(1+\frac{2r}{\sqrt{1-a^2}})^n-1\right]   &\text{if} & &\frac{\sqrt{1-a^2}}{2r} < n
    \end{array}
    \right.
  \end{equation*}
\end{lemma}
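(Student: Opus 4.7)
The plan is to analyse the coupled recurrences defining $f_k$ and $e_k$ in reverse order starting from the base case $e_n = f_n = 0$, and to treat each of the three regimes by selecting the appropriate branch of the $\min$ defining $e_k$. Writing $\beta = 2r/\sqrt{1-a^2}$, the thresholds of the lemma become $n < 1/(2\sqrt{1-a^2})$, $n\beta \leq 1$, and $n\beta > 1$ respectively. Using the hypothesis $|u_{k+1}(a)|\leq M$ and the rounding bound $\varepsilon_k < Mr$ to absorb the floating-point error introduced by the algorithm, the $f$-recurrence can be rewritten as the inequality $f_k \leq cMr + 2re_{k+1} + f_{k+1}$ for some explicit constant $c$.

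In the exponential regime $n\beta > 1$, I would use the second branch $e_k \leq f_k/\sqrt{1-a^2}$ at every step; this collapses the coupled system into the single scalar inequality $f_k \leq cMr + (1+\beta)f_{k+1}$. Solving this geometric recurrence with $f_n = 0$ yields $f_1 \leq cMr \cdot ((1+\beta)^{n-1}-1)/\beta$, and dividing by $\sqrt{1-a^2}$ produces the third bound $e \leq 2M[(1+\beta)^n-1]$ once $c$ is controlled.

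In the linear regime $n\beta \leq 1$, the same scalar recurrence still bounds $f_1$, but now I would invoke the elementary inequality $(1+\beta)^{n-1} - 1 \leq \sum_{j\geq 1}(n\beta)^j/j! \leq (e-1)\, n\beta$, valid whenever $n\beta\leq 1$, to convert the exponential estimate into the linear bound $e \leq 9Mnr/\sqrt{1-a^2}$ after substitution of $\beta$ and accounting for the $\varepsilon_k$-constant.

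In the quadratic regime $n < 1/(2\sqrt{1-a^2})$, the second branch of the $\min$ is loose because $1/\sqrt{1-a^2} > 2n$, so I would instead use $e_k \leq e_{k+1} + f_k$, which yields the telescoping bound $e_1 \leq \sum_{k=1}^{n-1} f_k$. A first-pass estimate $f_k \leq O(Mr(n-k))$ obtained by initially dropping the $2re_{k+1}$ coupling then sums to $O(Mn^2 r)$. The hard part will be closing this argument rigorously: the coupling $f_k \leftarrow 2re_{k+1}$ feeds the running estimate back into $f_k$, so the cleanest route is a reverse induction on $k$ using the tentative quadratic bound as the induction hypothesis and verifying that the correction is $O(Mn^3 r^2)$, which is lower order under the regime assumption $2n\sqrt{1-a^2} < 1$. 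Once this bookkeeping is settled and the explicit constants $2$ and $9$ are optimised, the three cases combine to give the stated piecewise bound.
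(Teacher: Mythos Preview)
Your three-regime strategy, the choice of which branch of the $\min$ to use in each regime, the collapse to the scalar geometric recurrence $f_k \leq cMr + (1+\beta)f_{k+1}$ for the exponential case, and the linearisation $(1+\beta)^n-1 \lesssim en\beta$ for the middle case are all exactly what the paper does. So on the large scale your plan matches the paper's proof.

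The one place where you diverge is the quadratic regime. The paper handles it by a single simultaneous descending induction with the explicit ans\"atze $e_k \leq 2M(n-k)^2r$ and $f_k \leq 2Mr\bigl(2(n-k)-1\bigr)$, which absorbs the coupling term $2re_{k+1}$ directly into the inductive hypothesis rather than treating it as a perturbation. Your two-pass ``drop the coupling, then show the correction is lower order'' route is workable in spirit, but the justification you give for the last step is not right: the regime condition $2n\sqrt{1-a^2}<1$ constrains $\sqrt{1-a^2}$, not $r$, so it does \emph{not} make $O(Mn^3r^2)$ lower order than $O(Mn^2r)$. What is actually needed there is a smallness assumption on $nr$ (or on $rn^2$), which is implicit in the paper's ``interval of small enough width'' setting; the paper's own induction also tacitly relies on this to close. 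If you switch to the simultaneous-induction formulation you will see exactly where and how the smallness of $r$ enters, and you will get the constant $2$ without a separate optimisation step.
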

\begin{proof}[sketch]
  We distinguish $2$ cases. First if $n < \frac1{2\sqrt{1-a^2}}$, we
  focus on the relation $e_k \leq e_{k+1} + f_k + Mr$, and we prove by
  descending recurrence that $e_k \leq 2M(n-k)^2r$ and $f_k \leq 2Mr(2(n-k-1)+1)$.
  
  For the case $\frac1{2\sqrt{1-a^2}} \leq n$, we use the relation $e_k
  \leq \frac{f_k}{\sqrt{1-a^2}} + Mr$, that we substitute in the
  recurrence relation defining $f_k$ to get $f_k \leq 2rM +
  \frac{2r}{\sqrt{1-a^2}} f_{k+1} + f_{k+1} + Mr\sqrt{1-a^2}$. We can
  check by recurrence that $f_k \leq
  \frac32M\sqrt{1-a^2}\left[(1+\frac{2r}{\sqrt{1-a^2}})^n-1\right]$,
  which allows us to conclude for the case $\frac{\sqrt{1-a^2}}{2r} \leq
  n$. Finally, when $\frac1{2\sqrt{1-a^2}} \leq n <
  \frac{\sqrt{1-a^2}}{2r}$, we observe that
  $(1+\frac{2r}{\sqrt{1-a^2}})^n-1 \leq n\exp(1)\frac{2r}{\sqrt{1-a^2}}$
  which leads to the bound for the last case.

\end{proof}

\begin{algorithm}
  \caption{Clenshaw evaluation algorithm, forward error}
  \label{alg:ballclenshaw_forward}
  \begin{algorithmic}
    \Function{BallClenshawForward}{$(a_0,\ldots, a_n)$, $a$, $r$}
      \State \(\triangleright\) \emph{Computation of the centers $u_k$}
      \State $u_{n+1} \gets 0$
      \State $u_n \gets a_n$
      \For{$k$ in $n-1, n-2, \ldots, 1$}
        \State $u_k \gets 2au_{k+1} - u_{k+2} + a_k$
        \State $\varepsilon_k \gets$ bound on the rounding error for $u_k$
      \EndFor
      \State $u_0 \gets a u_1 - u_2 + a_0$
      \State $\varepsilon_0 \gets$ bound on the rounding error for $u_0$
      \State
      \State \(\triangleright\) \emph{Computation of the radii $e_k$}
      \State $f_n \gets 0$
      \State $e_n \gets 0$
      \For{$k$ in $n-1, n-2, \ldots, 1$}
        \State $f_k \gets 2r|u_{k+1}| + 2re_{k+1} + f_{k+1}$
        \State $e_k \gets min(e_{k+1} + f_k, \frac{f_k}{\sqrt{1-a^2}}) + \varepsilon_k$
      \EndFor
      \State $f_0 \gets r|u_{1}| + 2re_{1} + f_{1}$
      \State $e_0 \gets min(e_1 + f_0, \frac{f_0}{\sqrt{1-a^2}}) + \varepsilon_0$
      \State \Return $\mathcal B_\mathbb R(u_0, e_0)$
    \EndFunction
  \end{algorithmic}
\end{algorithm}

\subsection{Backward error analysis}
In the literature, we can find an error analysis of the Clenshaw
algorithm \cite{Ejams68}.
The main idea is to add the errors appearing at each step of the
Clenshaw algorithm to the input coefficients. Thus the approximate
result correspond to the exact result of an approximate input. Finally,
the error bound is obtained as the evaluation of a Chebyshev polynomial.
This error analysis can be used directly
to derive an algorithm to evaluate a polynomial in the Chebyshev basis
on an interval in Algorithm~\ref{alg:ballclenshaw_backward}. 

\begin{lemma}
  Let $e_n=0$ and for $n > k \geq 1$:
  \begin{equation}
      e_k = 2r|u_{k+1}(a)| + e_{k+1}
  \end{equation}
  and $e_0 = r|u_1(a)| + e_1$.  Then $(e_k)$ satisfies 
  $u_k(\mathcal B_\mathbb R(a,r)) \subset \mathcal B_\mathbb R(u_k(a), e_k)$.
\end{lemma}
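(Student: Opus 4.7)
The plan is to recast the error sequence $\eta_k(x) := u_k(x) - u_k(a)$ as itself the output of a Clenshaw recurrence at $x$, and then apply the classical Clenshaw identity together with the uniform bound $|T_k| \leq 1$ on $[-1,1]$. Subtracting the Clenshaw recurrences for $u_k$ evaluated at $x$ and at $a$, and splitting the cross term as $2x\,u_{k+1}(x) - 2a\,u_{k+1}(a) = 2x\,\eta_{k+1}(x) + 2(x-a)\,u_{k+1}(a)$, I obtain
\[
\eta_k(x) = 2x\,\eta_{k+1}(x) - \eta_{k+2}(x) + 2(x-a)\,u_{k+1}(a)
\]
for $1\leq k<n$, together with $\eta_0(x) = x\,\eta_1(x) - \eta_2(x) + (x-a)\,u_1(a)$ and the boundary conditions $\eta_n = \eta_{n+1} = 0$.

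The key observation is that this is literally the Clenshaw recurrence at the point $x$ applied to the polynomial $q(X) := (x-a)u_1(a)\,T_0(X) + \sum_{k=1}^{n-1} 2(x-a)u_{k+1}(a)\,T_k(X)$, whose Chebyshev coefficients are functions of the parameter $x$ but do not depend on $X$. The Clenshaw identity therefore yields
\[
\eta_0(x) = (x-a)u_1(a) + \sum_{k=1}^{n-1} 2(x-a)u_{k+1}(a)\,T_k(x).
\]
Combining the triangle inequality with $|T_k(x)|\leq 1$ on $[-1,1]$ and $|x-a|\leq r$ gives $|\eta_0(x)|\leq r|u_1(a)| + 2r\sum_{k=1}^{n-1}|u_{k+1}(a)|$, which is exactly $e_0$ after telescoping the defining recurrence $e_k = 2r|u_{k+1}(a)| + e_{k+1}$ with $e_n = 0$. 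For $k\geq 1$, the bound $|\eta_k(x)|\leq e_k$ follows by re-running the same backward-error derivation on the Clenshaw evaluation that starts from the tail coefficients $(a_k,\ldots,a_n)$.

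The principal difficulty lies in the reinterpretation step: one must recognise that the recurrence satisfied by $\eta_k$ is itself a $T$-basis Clenshaw evaluation at $x$ with \emph{new} coefficients $2(x-a)\,u_{k+1}(a)$, and that the boundary conditions $\eta_n = \eta_{n+1} = 0$ match the standard Clenshaw initialisation, so that the $k=0$ level produces a $T$-basis expansion to which $|T_k|\leq 1$ applies directly. Once this identification is in place the estimate is immediate from a single triangle inequality and no further induction is needed.
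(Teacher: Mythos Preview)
Your argument for $k=0$ is essentially the paper's: both rest on Elliott's backward-error identity
\[
p(a)-p(x)\;=\;2(a-x)\sum_{i} u_i(a)\,T_i(x)\;-\;(a-x)\,u_1(a),
\]
except that you re-derive it by recognising the error sequence $\eta_k$ as itself a Clenshaw recurrence at $x$ with perturbed coefficients, while the paper simply quotes the formula from \cite{Ejams68}. The bound then follows identically from $|T_k|\le 1$ and $|x-a|\le r$.

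One caveat on your last sentence. Covering $k\ge 1$ by ``re-running the same derivation on the tail $(a_k,\dots,a_n)$'' does not work as stated: the intermediate $u_k$ is produced by the step $u_k=2x\,u_{k+1}-u_{k+2}+a_k$ with the factor $2x$, not $x$, so it is \emph{not} the $T$-series evaluation of the tail. The resulting expansion of $\eta_k$ is in second-kind polynomials $U_{j-k}(x)$, for which the uniform bound $|U_m|\le 1$ fails on $[-1,1]$. Indeed the inclusion can fail for intermediate $k$ (take $n=3$, $a_3=1$, the other $a_i=0$, $a=0$, $r=1$: then $|\eta_1(1)|=4>2=e_1$). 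The paper's sketch likewise only treats $k=0$, which is the case actually used in the algorithm and in the subsequent corollary.
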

\begin{proof}[sketch]
In the case where the computations are performed without errors, D.
Elliott \cite[Equation~(4.9)]{Ejams68} showed that for $\gamma = \tilde x - x$ we have:
$$p(\tilde x) - p(x) = 2\gamma \sum_{i=0}^n u_i(\tilde x) T_i(x) - \gamma u_1(\tilde x)$$

In the case where $\tilde x = a$ and $x \in \mathcal B_{\mathbb R}(a,r)$
we have $\gamma \leq r$ and $|T(x)| \leq 1$ which implies $e_k \leq
r|u_1(a)| + \sum_{i=2}^n 2r|u_i(a)|$.
\end{proof}
\begin{corollary}
  Let $\mathcal B_\mathbb R(u,e)=\verb|BallClenshawBackward|((a_0, \ldots, a_n), a, r)$ be the
  result of Algorithm~\ref{alg:ballclenshaw_backward}, and
  let $M$ be an upper bound on $|u_k(a)|$ for $1 \leq k \leq
  n$. Assume that $\varepsilon_k < Mr$ for $1 \leq k \leq n$, then $e < 3Mnr$.
\end{corollary}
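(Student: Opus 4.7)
The plan is to simply unroll the recurrence from the preceding lemma, incorporating the rounding-error contributions that \verb|BallClenshawBackward| accumulates at each step, and then bound every term in the resulting sum by $Mr$ (up to a factor of $2$).

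First I would write down the radius recurrence actually implemented by Algorithm~\ref{alg:ballclenshaw_backward}: starting from $e_n = 0$ and, for $n > k \geq 1$,
\begin{equation*}
  e_k = 2r|u_{k+1}(a)| + e_{k+1} + \varepsilon_k,
\end{equation*}
with the boundary update $e_0 = r|u_1(a)| + e_1 + \varepsilon_0$. The $\varepsilon_k$ terms account for the rounding errors made while computing the centers $u_k$, and the $2r|u_{k+1}(a)| + e_{k+1}$ part is exactly what the preceding lemma provides for the exact-arithmetic contribution.

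Next I would telescope the recurrence down to $e_0$, obtaining
\begin{equation*}
  e_0 \;=\; r|u_1(a)| + \sum_{k=1}^{n-1} 2r|u_{k+1}(a)| + \sum_{k=0}^{n-1} \varepsilon_k.
\end{equation*}
Using the hypotheses $|u_k(a)| \leq M$ for $1\leq k \leq n$ and $\varepsilon_k < Mr$, each $|u_{k+1}(a)|$ factor is bounded by $M$ and each $\varepsilon_k$ by $Mr$, so
\begin{equation*}
  e_0 \;\leq\; rM + 2(n-1)rM + n\,Mr \;=\; (3n-1)Mr \;<\; 3Mnr.
\end{equation*}

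There is no real obstacle: once the recurrence is correctly augmented with the $\varepsilon_k$ terms, the bound is a one-line telescoping sum. The only point that requires a little care is making sure the boundary step for $e_0$ (which has a coefficient $r$ rather than $2r$ in front of $|u_1(a)|$) is handled consistently, and that the rounding errors on the centers are indeed absorbed into the radius at each step of the algorithm, so that the final output ball genuinely contains $p(\mathcal B_\mathbb R(a,r))$.
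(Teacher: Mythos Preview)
Your argument is correct and is exactly the intended one: the paper states this corollary without proof, as an immediate consequence of the preceding lemma together with the extra $\varepsilon_k$ terms accumulated by the algorithm, and your telescoping sum is the natural way to make that explicit. The only small wrinkle is an index mismatch between the hypothesis as written (bounding $\varepsilon_k$ for $1\le k\le n$) and the indices that actually occur in the algorithm ($\varepsilon_0,\ldots,\varepsilon_{n-1}$); this is clearly an off-by-one in the paper's statement rather than in your proof, and reading the assumption as covering $\varepsilon_0$ as you do is the sensible interpretation.
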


\begin{algorithm}
  \caption{Clenshaw evaluation algorithm, backward error}
  \label{alg:ballclenshaw_backward}
  \begin{algorithmic}
    \Function{BallClenshawBackward}{$(a_0,\ldots, a_n)$, $a$, $r$}
      \State \(\triangleright\) \emph{Computation of the centers $u_k$}
      \State $u_{n+1} \gets 0$
      \State $u_n \gets a_n$
      \For{$k$ in $n-1, n-2, \ldots, 1$}
        \State $u_k \gets 2au_{k+1} - u_{k+2} + a_k$
        \State $\varepsilon_k \gets$ bound on the rounding error for $u_k$
      \EndFor
      \State $u_0 \gets a u_1 - u_2 + a_0$
      \State $\varepsilon_0 \gets$ bound on the rounding error for $u_0$
      \State
      \State \(\triangleright\) \emph{Computation of the radii $e_k$}
      \State $e_n \gets 0$
      \For{$k$ in $n-1, n-2, \ldots, 1$}
      \State $e_k \gets e_{k+1} + 2r|u_{k+1}| + \varepsilon_k$
      \EndFor
      \State $e_0 \gets e_{1} + r|u_{1}| + \varepsilon_0$
      \State \Return $\mathcal B_\mathbb R(u_0, e_0)$
    \EndFunction
  \end{algorithmic}
\end{algorithm}

\section{Application to root finding}
\label{sec:solve}

For classical polynomials, numerous solvers exist in the literature, such
as those described in \cite{KRSissac16} for example. For polynomials in
the Chebyshev basis, several approaches exist that reduce the problem to
polynomial complex root finding \cite{Bjna02}, or complex eigenvalue
computations \cite{Bsiam13} among other.

In this section, we experiment a direct subdivision algorithm based on
interval evaluation, detailed in Algorithm~\ref{alg:subdivision}.
This algorithm is implemented and publicly available in the software
\verb|clenshaw| \cite{Mclenshaw19}.

\begin{algorithm}
  \caption{Subdivision algorithm for root finding}
  \label{alg:subdivision}
  \begin{algorithmic}
    \Require{
      \begin{tabular}[t]{l}
        $(a_0, \ldots, a_n)$ represents the Chebyshev polynomial approximating $f(x)$\\
        $(b_0, \ldots, b_n)$ represents the Chebyshev polynomial approximating $\frac{df}{dx}(x)$
      \end{tabular}
    }
    \Ensure{$Res$ is a list of isolating intervals for the roots of $f$ in $[-1, 1]$}
    \Function{SubdivideClenshaw}{$(a_0, \ldots, a_n)$, $(b_0,\dots, b_n)$}
      \State \(\triangleright\) \emph{Partition $[-1,1]$ in intervals where $F$ either has constant sign or is monotonous}
      \State $L \gets [ \mathcal B_\mathbb R(0, 1)]$
      \State $Partition \gets []$
      \While{$L$ is not empty}
      \State $\mathcal B_\mathbb R(a, r) \gets$ \texttt{pop} the first element of $L$
        \State $\mathcal B_\mathbb R(f, s) \gets \texttt{BallClenshaw}\left((a_0, \ldots, a_n), a, r\right)$
        \State $\mathcal B_\mathbb R(df, t) \gets \texttt{BallClenshaw}\left((b_0, \ldots, b_n), a, r\right)$
        \If{$f-s > 0$}
          \State \texttt{append} the pair $(\mathcal B_\mathbb R(a,r),
          "plus")$ to $Partition$
        \ElsIf{$f+s < 0$}
          \State \texttt{append} the pair $(\mathcal B_\mathbb R(a,r),
          "minus")$ to $Partition$
        \ElsIf{$g-s > 0$ or $g+s < 0$}
          \State \texttt{append} the pair $(\mathcal B_\mathbb R(a,r),
          "monotonous")$ to $Partition$
        \Else
          \State $\mathcal B_1, \mathcal B_2 \gets \texttt{subdivide} \mathcal B_\mathbb R (a, r)$
          \State \texttt{append} $\mathcal B_1, \mathcal B_2$ to $L$
        \EndIf
      \EndWhile
      \State
      \State \(\triangleright\) \emph{Compute the sign of $F$ at the boundaries}
      \State $\mathcal B_\mathbb R(f, s) \gets \texttt{BallClenshaw}\left((a_0, \ldots, a_n), -1, 0\right)$
      \State \texttt{append} the pair $(\mathcal B_\mathbb R(-1,0),
      \texttt{sign}(\mathcal B_\mathbb R(f, s)))$ to $Partition$
      \State $\mathcal B_\mathbb R(f, s) \gets \texttt{BallClenshaw}\left((a_0, \ldots, a_n), 1, 0\right)$
      \State \texttt{append} the pair $(\mathcal B_\mathbb R(1,0),
      \texttt{sign}(\mathcal B_\mathbb R(f, s)))$ to $Partition$
      \State
      \State \(\triangleright\) \emph{Recover the root isolating intervals}
      \State $Partition \gets$ \texttt{sort} $Partition$
      \State $Res \gets$
        \parbox[t]{0.8\linewidth}{the "monotonous" intervals of $Partition$\\
          such that the adjacent intervals have opposite signs}
      \State \Return $Res$
    \EndFunction
  \end{algorithmic}
\end{algorithm}

We applied this approach to Chebyshev polynomials
whose coefficients are independently and identically distributed with
the normal distribution with mean $0$ and variance $1$.

As illustrated in Figure~\ref{fig:experiment} our code performs
significantly better than the classical companion matrix approach. In
particular, we could solve polynomials of degree $90000$ in the Chebyshev
basis in less than $5$ seconds and polynomials of degree $5000$ in
$0.043$ seconds on a quad-core Intel(R) i7-8650U cpu at 1.9GHz. For
comparison, the standard numpy function \verb|chebroots| took more than
$65$ seconds for polynomials of degree $5000$.  Moreover, using least
square fitting on the ten last values, we observe that our approach has
an experimental complexity closer to $\Theta(n^{1.67})$, whereas the
companion matrix approach has a complexity closer to $\Theta(n^{2.39})$.
\begin{figure}
  \includegraphics[width=\linewidth]{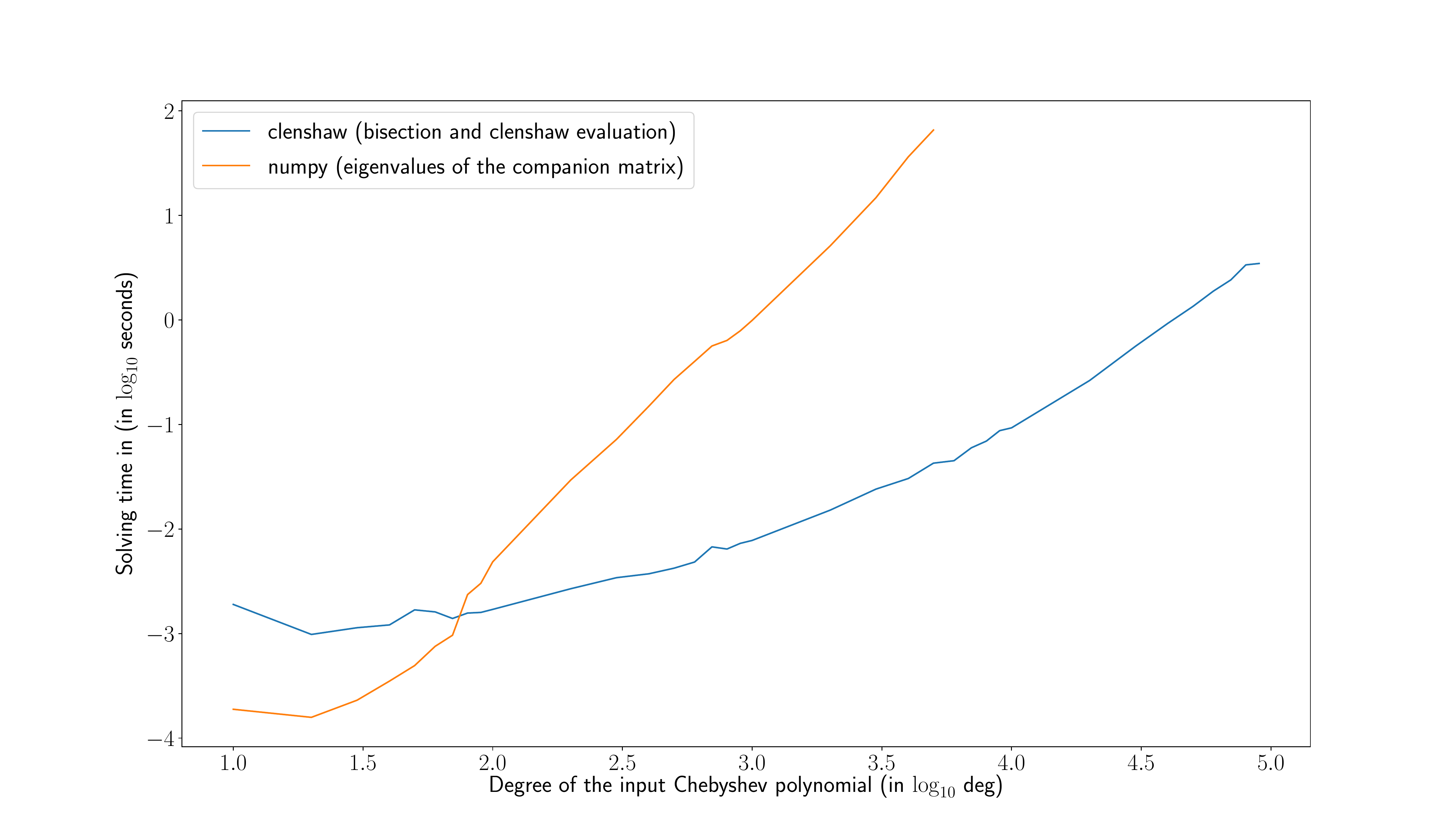}
  \caption{Time for isolating the roots of a random Chebyshev polynomial, on a
  quad-core Intel(R) i7-8650U cpu at 1.9GHz, with 16G of ram}
  \label{fig:experiment}
\end{figure}
\bibliographystyle{splncs04}
\bibliography{chebyshev}
\end{document}